\documentclass[12pt]{article}
\usepackage{latexsym,amssymb,upref,amsmath,amsthm, amsfonts}
\usepackage[bookmarks=true, pdftoolbar=true]{hyperref}

%Theorems
\newtheorem{theorem}{Theorem}[section]
\newtheorem{lemma}[theorem]{Lemma}
\newtheorem{proposition}[theorem]{Proposition}
\newtheorem{corollary}[theorem]{Corollary}
\newtheorem{claim}[theorem]{Claim}

\theoremstyle{definition}
%\newtheorem{construction}[theorem]{Construction}
%\newtheorem{definition}[theorem]{Definition}
%\newtheorem{remark-thm}[theorem]{Remark}

% with numbers
%\newtheorem{example}[theorem]{Example}
% without numbers
\newtheorem*{example}{Example}

%\theoremstyle{remark}
%\newtheorem{remark}[theorem]{Remark}

%numbering preferences
%\numberwithin{theorem}{section}
\numberwithin{equation}{section}

%floor and ceiling commands

% For newlines

%typical uppercase script letters
\newcommand\mc[1]{\mathcal{#1}}

%epsilon regular

% k-partition

% V_\sigma(i)

% deg sequence

% number of graphs realizing a sequence:

% min prob, max degree

% Binomial Distribution
\newcommand\Bin{\text{Bin}}

\title{The phase transition in inhomogeneous random intersection graphs}

\author{Milan Bradonji\'c\thanks{Mathematics of Networks and Communications, Bell Labs, Alcatel-Lucent, 600 Mountain Avenue, Murray Hill, New Jersey 07974, USA; \texttt{milan@research.bell-labs.com}. Research supported in part by NIST grant 60NANB10D128. Part of this work was done at Los Alamos National Laboratory.}
\and Aric Hagberg\thanks{Center for Nonlinear Studies and Theoretical Division, Los Alamos National Laboratory, Los Alamos, NM 87545, USA;~\texttt{hagberg@lanl.gov}.}
\and Nicolas W. Hengartner\thanks{Information Sciences Group, Los Alamos National Laboratory, Los Alamos, NM 87545, USA;~\texttt{nickh@lanl.gov}.}
\and Nathan Lemons\thanks{Center for Nonlinear Studies and Theoretical Division, Los Alamos National Laboratory, Los Alamos, NM 87545, USA;~\texttt{nlemons@lanl.gov}.}
\and Allon G. Percus\thanks{School of Mathematical Sciences, Claremont Graduate University, Claremont, CA 91711, USA;~\texttt{allon.percus@cgu.edu}.}
}

\begin{document}
\maketitle

\begin{abstract}
We analyze the component evolution in inhomogeneous random intersection graphs when the average degree is close to 1.  As the average degree increases, the size of the largest component in the random intersection graph goes through a phase transition.  We give bounds on the size of the largest components before and after this transition.   We also prove that the largest component after the transition is unique.  These results are similar to the phase transition in Erd\H os-R\'enyi random graphs; one notable difference is that the jump in the size of the largest component varies in size depending on the parameters of the random intersection graph. % Our results generalize previous research which explored this phenomena for uniform (homogeneous) random intersection graphs.

\textbf{Keywords:} Random intersection graphs, random graphs, giant component, phase transition, branching process.
%, stochastic process.
\end{abstract}

\section{Introduction}
%Graphs are increasingly being used as the basic structures in a wide range of scientific models; this is true in such diverse fields as social science, biology and computer science.  A common theme of research has therefore been to define models of random graphs which have some of the properties observed in real networks.  Random intersection graphs are exactly one such approach.  Here we study the phase transition in random intersection graphs.

The well-studied Erd\H{o}s-R\'enyi graph, $G(n,p)$, is a basic model for random networks that is amenable to structural analysis. However, $G(n,p)$ is not suited as a model for real-world networks; perhaps the most common criticism is that sparse realizations of $G(n,p)$ do not exhibit clustering~\cite{gl}.  Thus $G(n,p)$ is not a good model for most social networks which are usually sparse and have nontrivial clustering.  In many cases this phenomenon (sparsity together with clustering) is a result of the graph originating as the intersection graph of a larger bipartite graph.  
For example, the well-known collaboration graphs of scientists (or of movie actors) is derived from the bipartite graph of scientists and papers (respectively, actors and movies)~\cite{watts-1998-collective,newman-2001-scientific}. 
%% Given a bipartite graph whose parts are labeled `vertices' and `attributes', the associated intersection graph is defined on the set of vertices; any two vertices are connected if they share at least one common attribute in the bipartite graph.   Simulations of epidemic spread in human populations are often performed on networks constructed from bipartite graphs of people and the locations they visit during a typical day \cite{eubank-2004-modelling}.  Bipartite graph structure is not only found in social networks.  The relation between nodes and keys in secure wireless communication, for example, forms a bipartite network~\cite{bloznelis-2009-component}. Moreover, bipartite factor graphs have become a standard representation for constraint satisfaction problems \cite{rossi-2006-handbook}, such as $k$-SAT and graph coloring.  In general, bipartite graphs are well suited to problems of classifying objects, where each object has a set of properties \cite{godehardt-2007-random}.
%Thus 

A simple natural model for such networks is the random intersection graph.  Random intersection graphs were introduced by Karo\'{n}ski, Scheinerman and Singer-Cohen \cite{singer-1995-thesis,karonski-1999-random} and have recently attracted much attention \cite{bloznelis-2013-assortativity,dk,jks,gjr1,ryb-inhomogenous}. 
We study the phase transition for components in the inhomogeneous random intersection graph model defined by Nikoletseas, Raptopoulos and Spirakis \cite{nrs, nrs1}.  Let $\mathbf{p}=(p_i)_{i=1}^{m}$ be a sequence of $m$ probabilities, $V$ a set of $n$ vertices and $A=\{a_1, a_2, \ldots, a_m\}$ a set of $m$ attributes.  For $(v,a_i)\in V\times A$, define independent indicator random variables $\mc{I}_{v,a_i}\equiv$ Bernoulli$(p_i)$.  A random bipartite graph $B$ is defined on the vertices $V$ and attributes to contain exactly those edges, $(v, a_i)$ for which $\mc{I}_{v,a_i}=1$.  Finally, the random intersection graph $G$ is obtained from the bipartite graph $B$ by projecting onto the vertices $V$: two vertices are connected in $G$ if they share at least one common attribute in $B$.

This paper is concerned with asymptotic results; for each $n$ let $\mathbf{p}^{(n)}$ be a vector of $m=m(n)$ probabilities.  This defines a sequence of random intersection graphs indexed by $n$.  We say an event $E_n$ holds {\it with high probability} if $\mathbb{P}[E_n]\rightarrow 1$ as $n\rightarrow \infty$.  We show that depending on the sequences $\mathbf{p}^{(n)}$ one may observe larger or smaller jumps in the phase transition.  

In previous work \cite{beh, ll}, the phase transition was located for random intersection graphs defined with uniform probabilities.  The component evolution of inhomogeneous random intersection graphs has been studied for a different model of random intersection graphs \cite{blo1, blo2}.  The results in these papers used tools developed by Bollob\'as, Janson and Riordan\cite{bjr} and are exact, though they only consider those cases when the giant component is linear.  In \cite{bollobas_sparse_2011} another general model of sparse random graphs is introduced and analyzed.  %Recently the asymptotic behavior of the assortativity coefficient, the expected number of common neighbors of adjacent nodes, and the expected degree of a neighbor of a node of a given degree have been determined for sparse random intersection graphs~\cite{}.
Behrisch~\cite{beh} studied the uniform homogeneous case when all $p_i\equiv p= c/\sqrt{nm}$ and noted that if $p=\omega(1/n)$, then the largest component jumps from size $O(np\log n)$ to $\Theta(p^{-1})$; a smaller jump than observed in Erd\H os R\'enyi random graphs.  On the other hand for $p=O(1/n)$ the largest component jumps from size $O(\log n)$ to $\Theta(n)$; a jump similar to that in Erd\H os-R\'enyi random graphs.  Indeed for $m$ large enough and $p_i\equiv c/\sqrt{mn}$, the random intersection model is equivalent to $G(n,p)$\cite{fss-c, ryb2}.  Our theorems show these phenomena occurring in the more general setting of inhomogeneous random intersection graphs as well.

%Our results will show that if $n\sum p_i^2 =c<1$ then with high probability there are no ``large'' components in the random intersection graph.  On the other hand, if $n \sum p_i^2 =c>1$, then with high probability there is a unique ``large'' component.  These statements are formalized in the following theorems.

\begin{theorem}\label{sub-main} If $n\sum p_{i}^{2}<1$ then with high probability all components in $G$ will have size at most $O(\max\{np\log n,\log n\})$ where $p=\max p_{i}$. \end{theorem}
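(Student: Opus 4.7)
The plan is to dominate $|C(v)|$ by the total progeny of a subcritical Galton--Watson process and then apply a Chernoff-type tail bound. Fix $v\in V$ and explore the component of $v$ in the bipartite graph $B$ by breadth-first search, alternately revealing the attributes of each newly discovered vertex and the vertices incident to each newly discovered attribute. Because each Bernoulli indicator $\mc{I}_{u,a_i}$ is queried at most once during the exploration, the number of new vertices produced each time a previously discovered vertex is ``finished'' is stochastically dominated by an i.i.d.\ copy of
\[
  X \;:=\; \sum_{i=1}^{m} \mc{I}_i\, Z_i,
\]
where $\mc{I}_i \sim \Bin(1,p_i)$ and $Z_i \sim \Bin(n-1,p_i)$ are mutually independent. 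Setting $\mu := \E[X] = (n-1)\sum p_i^2$, the hypothesis $n\sum p_i^2 < 1$ makes this Galton--Watson tree strictly subcritical, so $|C(v)|$ is bounded above by its total progeny $T$. The standard first-passage encoding reduces the problem to
\[
  \mathbb{P}[T \geq k] \;\leq\; \mathbb{P}\!\left[X_1 + \cdots + X_{k-1} \geq k-1\right].
\]

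The heart of the proof is a Chernoff bound on this i.i.d.\ sum. Writing $s := e^t - 1$, independence and the inequality $1+x \leq e^x$ give
\[
  \E[e^{tX}] \;\leq\; \exp\!\left(\sum_i p_i\bigl(e^{(n-1)p_i s} - 1\bigr)\right).
\]
Restricting $t$ so that $y := (n-1)p\,s \leq 1$ and applying $e^x - 1 \leq x(1+x)$ to each inner term collapses this, using $(n-1)\sum p_i^2 = \mu$, to the clean estimate $\E[e^{tX}] \leq \exp((1+y)\mu s)$. I then fix a small $c>0$ so that $(1+c)\mu \leq 1-\delta$ for some $\delta > 0$ (possible because $\mu < 1$) and calibrate $s$ to the magnitude of $np$. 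When $np$ is bounded one may take $s = \delta$ directly, yielding $\mathbb{P}[T \geq k] \leq \exp(-c_1 k)$ and handling $k = \Theta(\log n)$; when $np$ is large, the admissibility constraint $y \leq c$ forces $s = c/(np)$, and Chernoff then gives $\mathbb{P}[T\geq k] \leq \exp(-c_2 k/(np))$, which handles $k = \Theta(np\log n)$. In either regime, taking $k$ to be a sufficiently large constant multiple of $\max\{np\log n,\log n\}$ drives the tail probability below $n^{-2}$, and a union bound over the $n$ choices of starting vertex then finishes the proof.

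The main obstacle is controlling the ``burst'' structure of the offspring: a single attribute $a_i$ delivers a $\Bin(n-1,p_i)$ batch of children all at once, so the upper tail of $X$ is dramatically heavier than its $\Theta(1)$ mean would suggest. This is precisely why one cannot push the Chernoff parameter beyond $t = \Theta(1/(np))$ without losing the MGF estimate, and it is what produces the extra factor of $np$ in the theorem's bound. Splitting cleanly into the two regimes, and ensuring that the $O(y^2)$ linearisation error in the MGF is absorbed by the slack $1-\mu$, is the delicate bookkeeping.
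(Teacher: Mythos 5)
Your proposal is correct and reaches the same bound, but via a genuinely different route from the paper. The paper collapses the whole exploration into a single binomial: it shows that $\sum_{i\le k} X_i$ is stochastically dominated by $\Bin\!\big(n,\sum_{i\le k} W_i\big)$ where $W_i = \sum_{j\in A_i'} p_j$ is the weight of newly-discovered attributes, then controls $\sum_{i\le k} W_i$ via the Chung--Lu form of McDiarmid's inequality (with $M_2 = p$), and finally applies a single Chernoff bound to the binomial. You instead dominate the component size by the total progeny of a subcritical Galton--Watson process whose offspring is the compound variable $X = \sum_i \mathcal{I}_i Z_i$, and run Chernoff directly on the MGF of that offspring, splitting by the magnitude of $np$. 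The constraint $y = (n-1)ps \le 1$ that caps your Chernoff parameter at $t = \Theta(1/(np))$ plays exactly the role that $M_2 = p$ plays in the paper's Chung--Lu step: both encode the ``burst'' phenomenon that a single attribute contributes an expected clique of size $np$, and both produce the extra $np$ factor in the bound. Your route dispenses with the Chung--Lu concentration lemma entirely at the cost of a more delicate MGF calibration (ensuring the $O(y^2)$ linearization error is absorbed by the slack $1-\mu$); the paper's route is more modular, cleanly separating attribute-weight concentration from vertex counting. Both yield the two regimes and the $O(\max\{np\log n, \log n\})$ bound.
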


Note that each attribute $a_i$ contributes a clique of expected size $np_i$ to the random intersection graph; thus Theorem \ref{sub-main} is very close to best possible.

\begin{theorem}\label{sup-main}
If $n\sum p_i^2=c>1$ and there exists a $\gamma>1/2$ such that $\max \mathbf{p}^{(n)} = o(n^{-\gamma})$, then with high probability there exists a unique largest component.  This component will have size $(1-\rho)n$ where $\rho$ is the unique solution in $[0,1)$ to the equation 
\begin{equation}\label{GW-multi}
x = \prod_{i=1}^m \big[ 1-p_i(1-(1-p_i(1-x))^n) \big].
\end{equation}
All other components will have size of order $O(\max\{np \log n, \log n\})$.
\end{theorem}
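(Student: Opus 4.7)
The plan is to combine a two-type branching-process approximation with a sprinkling argument, in the spirit of Bollob\'as, Janson and Riordan.

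\emph{Step 1 (Branching process and the fixed point $\rho$).} I would explore the component of a vertex $v$ by a BFS that alternates between vertices (revealing which attributes they hold, each $a_i$ independently with probability $p_i$) and attributes (revealing which other vertices hold them). So long as fewer than $o(n)$ vertices and $o(m)$ attributes have been explored, this exploration couples, with total variation error $o(1)$, to a two-type Galton--Watson branching process whose mean offspring matrix has Perron eigenvalue $n\sum p_i^2 = c > 1$. The hypothesis $\max p_i = o(n^{-\gamma})$ with $\gamma>1/2$ is exactly the estimate that bounds the probability that any single attribute is sampled twice during a sub-linear exploration. Computing the extinction probability $\rho$ of this BP --- using that the process dies iff, for every attribute $a_i$, either $v$ does not hold $a_i$ or every other holder of $a_i$ produces an extinct subtree --- will recover equation~(\ref{GW-multi}), whose unique root in $[0,1)$ is $\rho$.

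\emph{Step 2 (Size of the giant).} Call $v$ \emph{light} if its BFS terminates after exploring at most $T_0 := K\max\{np\log n,\log n\}$ vertices, and \emph{heavy} otherwise. The coupling above gives $\mathbb{P}[v \text{ light}] = \rho + o(1)$. A second-moment computation, coupling the joint BFS of two random vertices $u,v$ to two independent copies of the BP, should yield $\operatorname{Var}(\#\text{light vertices}) = o(n^2)$, so Chebyshev gives $\#\text{light} = (\rho + o(1))n$ with high probability. Every light vertex lies in a component of size at most $T_0$, matching the bound of Theorem~\ref{sub-main}, and consequently the heavy vertices collectively account for $(1-\rho+o(1))n$ vertices.

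\emph{Step 3 (Uniqueness via sprinkling).} I would decompose $p_i = p_i' + (1-p_i')p_i''$ with $p_i'' = \varepsilon p_i$ for small fixed $\varepsilon > 0$, so that $G = G' \cup G''$ is the union of two independent inhomogeneous intersection graphs with parameter vectors $\mathbf{p}'$ and $\mathbf{p}''$. For $\varepsilon$ small enough, $n\sum (p_i')^2$ still exceeds $1$, so by Steps 1--2 the heavy vertices of $G'$ form components of total size $\Omega(n)$. Conditioning on $G'$ and exposing $G''$, for any two such components of sizes $s_1,s_2=\Omega(n)$ the probability that no attribute of $G''$ is simultaneously held by a vertex from each is at most $\exp(-\Omega(n\sum (p_i'')^2)) = e^{-\Omega(n)}$, so a union bound over the $O(1)$ candidate pairs merges them into one giant component. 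Letting $\varepsilon \to 0$ and tracking the fixed-point equations along the way will identify its size as $(1-\rho)n$.

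The main obstacle will be the second-moment estimate in Step 2: two BFS explorations share a common attribute pool, so they are not genuinely independent, and the covariance of the light-indicators is governed by the probability that both explorations sample a common attribute. This collision probability is $O(T_0^2 \sum p_i^2)$, and it is precisely the hypothesis $\max p_i = o(n^{-\gamma})$ with $\gamma>1/2$ that keeps it at $o(1)$; weakening this hypothesis would allow a single attribute to span large portions of both explorations and destroy the approximate independence on which the giant-size argument rests.
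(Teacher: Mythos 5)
Your Steps 1 and 2 track the paper's argument in spirit: the paper also derives equation~(\ref{GW-multi}) as the extinction fixed point of an $(m+1)$-type Galton--Watson process (Corollary~\ref{GW}), and it too controls the count of small-component vertices by sandwiching $\rho$ between branching-process extinction probabilities (via Lemma~\ref{probs}) and then applying a Chebyshev second-moment bound. Step 3, however, takes a genuinely different route, and it is there that the proposal has real gaps.

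First, the decomposition $p_i=p_i'+(1-p_i')p_i''$ does give $B=B'\cup B''$ at the bipartite level, but projection does not commute with union: a vertex may hold $a_i$ via $B'$ while another holds $a_i$ via $B''$, creating an edge of $G$ that lies in neither $G'$ nor $G''$. So $G\supseteq G'\cup G''$, which is enough for a lower bound on the giant, but the clean identity $G=G'\cup G''$ you assert is false, and the subsequent ``tracking the fixed-point equations as $\varepsilon\to0$'' must account for the extra mixed edges. Second, and more seriously, you calibrate the sprinkling to components of size $\Omega(n)$, but the paper stresses explicitly that under these hypotheses the largest component \emph{need not} be linear --- e.g.\ when $m=o(n)$ in the uniform case, $1-\rho=\Theta(mp)=o(1)$. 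In that regime your merge probability estimate (which in any case should read $\exp(-\Omega(s_1 s_2\sum(p_i'')^2))$, not $\exp(-\Omega(n\sum(p_i'')^2))$) must be run at the scale $s_1,s_2\approx n^{\gamma}$, and the ``$O(1)$ candidate pairs'' claim becomes a polynomial-in-$n$ union bound. These can all be repaired, but as written Step 3 does not cover the sublinear-giant case the theorem is specifically designed to include.

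The paper avoids sprinkling entirely: it shows there are no components of intermediate size $k\in[k_-,k_+]$ with $k_+=n^{\gamma}$, and then observes that any two components that each survive $k_+$ exploration steps expose $\Theta(n^{\gamma})$ unsaturated vertices, which with probability $1-\exp(-\Omega(n^{\gamma-1/2}))$ must share one of the still-undiscovered attributes, since the undiscovered attributes retain total weight $\sum p_i\geq n^{-1/2}$. This direct merging through shared attributes is a feature specific to intersection graphs that has no Erd\H{o}s--R\'enyi analogue; it sidesteps the sprinkling bookkeeping and scales automatically to sublinear giants because the exponent $\gamma-1/2>0$ does not depend on the giant's ultimate size. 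Your approach would work after the repairs above, but it imports more machinery than the problem requires.
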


Importantly, the unique largest component guaranteed by Theorem \ref{sup-main} is not necessarily linear in $n$.  Under the conditions of the Theorems \ref{sub-main} and \ref{sup-main}, there is, however, a jump in the size of the largest component when transitioning from the subcritical phase (when $n\sum p_i^2<1$) to the supercritical phase (when $n\sum p_i^2>1$).  Thus a phase transition is observed.  The phase transition is made apparent in comparing Theorems \ref{sub-main} and \ref{sup-main-old} though the later is not necessarily best possible.

As our model is quite general, our theorems do not always give the best possible bounds. What is perhaps surprising is that despite the generality of the model, we can locate the phase transition exactly. No assumptions of uniformity nor convergence of the sequences ${\bf p}^{(n)}$ are necessary; we only require that $n\sum p_{i}^{2}$ be a constant. Because of this generality, there are many cases where the solutions to Equation(\ref{GW-multi}) do not  converge as $n\rightarrow\infty$.  In such cases, even the order of magnitude of the unique largest component may fluctuate. To compensate for this, we state a weaker version of Theorem \ref{sup-main} which gives a lower bound on the order of magnitude of the unique largest component.  We also show, in Proposition \ref{uniform-case}, how to use Theorem \ref{sup-main} to derive the exact size of the largest component in the uniform (homogeneous) case.  In this way we recover exactly previously proved results using our more general method \cite{beh, ll}.

\begin{theorem}\label{sup-main-old}
If $n\sum p_i^2=c>1$ and there exists a $\gamma>1/2$ such that $p = \max p_i = o(n^{-\gamma})$, then with high probability there exists a unique giant component.  If $p^{-1}=o(n)$, this component will be of size at least $\Omega(p^{-1})$.  Otherwise it will be of size $\Omega(n)$.  All other components will have size at most $O(\max\{np \log n, \log n\})$.
\end{theorem}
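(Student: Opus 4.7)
The plan is to derive Theorem \ref{sup-main-old} as a corollary of Theorem \ref{sup-main}. That result already gives uniqueness of the largest component, its exact size $(1-\rho)n$, and the $O(\max\{np\log n,\log n\})$ bound on every other component, so the only remaining task is the quantitative lower bound
\[
(1-\rho)n \;=\; \Omega\!\bigl(\min\{p^{-1},n\}\bigr),
\]
which is a purely analytic statement about the equation (\ref{GW-multi}).

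Setting $u=1-\rho$ and $g_i(u)=1-p_i(1-(1-p_iu)^n)$, equation (\ref{GW-multi}) becomes $u=\tilde f(u)$, where $\tilde f(u):=1-\prod_i g_i(u)$; Theorem \ref{sup-main} guarantees that $u=1-\rho$ is the unique positive fixed point. The key preliminary step is to bound $\tilde f$ from below by a tractable function. Using $-\log(1-x)\ge x$ on each factor $g_i$ yields
\[
\tilde f(u)\;\ge\;1-e^{-R(u)},\qquad R(u)\;:=\;\sum_{i=1}^{m}p_i\bigl(1-(1-p_iu)^n\bigr),
\]
and the elementary chain $(1-y)^n\le e^{-ny}\le 1-ny+\tfrac{1}{2}(ny)^2$, together with $\sum_i p_i^3\le p\sum_i p_i^2=pc/n$, gives
\[
R(u)\;\ge\;c\,u\;-\;\tfrac{1}{2}\,np\,c\,u^2.
\]

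The plan is then to exhibit explicit $u_0$ satisfying $\tilde f(u_0)>u_0$; since $\tilde f(u)>u$ precisely on $(0,1-\rho)$, any such $u_0$ witnesses $u_0<1-\rho$. Chaining the two displayed inequalities with the standard estimate $-\log(1-u_0)\le u_0+u_0^2$ (valid on $[0,\tfrac{1}{2}]$), a sufficient condition for $\tilde f(u_0)>u_0$ becomes the quadratic inequality $(c-1)u_0>(1+npc/2)u_0^2$, equivalently
\[
u_0\;<\;\min\!\left\{\tfrac{1}{2},\;\frac{c-1}{1+npc/2}\right\}.
\]
Taking the supremum over admissible $u_0$ gives the same minimum as a lower bound on $1-\rho$, which is $\Omega(1)$ when $np=O(1)$ (so $(1-\rho)n=\Omega(n)$) and $\Omega(1/(np))$ when $np\to\infty$ (so $(1-\rho)n=\Omega(p^{-1})$).

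The main subtlety is ensuring the lower bound on $R$ remains substantive across the admissible range of $u_0$. Since a short computation shows $(c-1)/(1+npc/2)<2/(np)$, one has $npu_0<2$ throughout, and therefore $R(u_0)\ge cu_0(1-npu_0/2)>0$, which is what is needed for the inequality chain to be informative. Beyond this one-line check, no further probabilistic input is required: Theorem \ref{sup-main} carries all of the structural content, and the remaining argument collapses to the analytic manipulation above.
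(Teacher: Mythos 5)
Your proof is correct, and it takes a genuinely different route for the quantitative part. The paper proves Theorem~\ref{sup-main-old} by giving a weaker upper bound on $\rho$ via stochastic domination: it applies Lemma~\ref{probs} to compare the (truncated) inhomogeneous Galton--Watson process $\mathcal{Y}$ against a homogeneous one $\mathcal{X}$ with $m=\lfloor c/(np^2)\rfloor$ attributes all of weight $p$, and then reads off the extinction probability from the explicit formulas of Proposition~\ref{uniform-case}, splitting into the regimes $m=o(n)$, $n=o(m)$, $m=\Theta(n)$. You instead treat Theorem~\ref{sup-main} as a black box giving the giant's size as $(1-\rho)n$ with $\rho$ the smallest fixed point of \eqref{GW-multi}, and you lower-bound $1-\rho$ directly by an analytic argument on the fixed-point equation: bound $\tilde f(u)\ge 1-e^{-R(u)}$, bound $R(u)\ge cu(1-npu/2)$ using $\sum_ip_i^3\le p\,c/n$, and conclude via concavity of $\tilde f$ (so that $\tilde f(u_0)>u_0$ implies $u_0<1-\rho$) that $1-\rho\ge\min\{1/2,(c-1)/(1+npc/2)\}=\Omega(\min\{1,1/(np)\})$. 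All the individual inequalities check out, including the sanity check that $npu_0<2$ so the quadratic lower bound on $R$ stays positive. What the paper's route buys is modularity — it reuses already-proved lemmas and sidesteps direct manipulation of the product in~\eqref{GW-multi}; what your route buys is self-containment and transparency — a single short chain of elementary estimates, no case split on the relative sizes of $m$ and $n$, and no need to construct a dominating homogeneous instance. One small thing worth making explicit in a writeup: the step ``$\tilde f(u)>u$ precisely on $(0,1-\rho)$'' uses that $F(x)=\prod_i[1-p_i(1-(1-p_i(1-x))^n)]$ is convex on $[0,1]$ (each factor is a convex increasing composition of probability generating functions), so $\tilde f(u)=1-F(1-u)$ is concave with $\tilde f(0)=0$ and $\tilde f'(0)=c>1$; this is standard but should be stated.
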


\begin{example}  Let $m=n^\alpha,\; \alpha<1$, and set $p_i\equiv c/\sqrt{mn}$ for some constant $c$.  If $c<1$ then by Theorem \ref{sub-main}, each component has size at most $O(\sqrt{n/m}\log n)$.  On the other hand, if $c>1$, there exists a unique largest component whose size is $\Omega(np)$ by Theorem \ref{sup-main}.  See Proposition \ref{uniform-case} for the derivation of the exact bound.  These bounds are the same as obtained by Behrisch \cite{beh}.
\end{example}

\begin{example}  Let $m=\beta n$, and set $p_i\equiv c/\sqrt{mn}$ for some constant $c$.  If $c<1$, then each component has size at most $O(\log n)$.  This is the same bound as obtained in \cite{ll}.  On the other hand, if $c>1$ Theorem \ref{sup-main} implies the existence of a unique largest linear component; in Proposition \ref{uniform-case} the exact size is derived.  Here our bounds are the same as previously derived \cite{ll}.
\end{example}

As is standard in the analysis of the phase transition of random graphs we will use both concentration results and the theory of branching processes, specifically Galton-Watson processes.  In the next two sections, we collect the results we will use from these two areas.  % For those results which are either well known or easily follow from well know results we will not provide proofs.
We do not provide proofs for results which are either well known or easily derived from well know results. 

\section{Concentration of Measure}

Recall the Chernoff inequality on $X \sim  \Bin(n,p)$ with $t>0$ (for a proof see \cite{jlr} Theorem 2.1) 
\begin{equation}\label{Chernoff}
\mathbb{P}\left[X\geq \mathbb{E}X+t \right] \leq \exp\left(-\frac{t^2}{2(np+t/3)}\right) \,.
\end{equation}
Given a subset of the attributes $\{a_{i_1},a_{i_2},\ldots, a_{i_k}\}$, it will be useful to approximate the number of vertices likely to be connected to at least one of the given attributes.  To find an upper bound on the number of vertices, we first estimate
$$
W=\sum_{i\in \{i_1, i_2, \ldots, i_k\}} p_i
$$
and then use Equation (\ref{Chernoff}).  To find good approximations of $W$ as above, we need the following useful generalization of Equation~(\ref{Chernoff}) due to McDiarmid~\cite{mcd} and further generalized by Chung and Lu~\cite{cl}.

\begin{theorem}[\cite{cl}]\label{Chung-Lu}
Suppose $Y_i$ are independent random variables with $M_1\leq Y_i\leq M_2$, for $1\leq i\leq n$.  Let $Y=\sum_{i=1}^{n} Y_i$ and $||Y||=\sqrt{\sum_{i=1}^{n} \mathbb{E}(Y_i^2)}$. Then 
\begin{equation}\label{CL-up}
\mathbb{P}[Y\geq \mathbb{E}Y + \lambda]\leq\exp\left(-\frac{\lambda^2}{2(||Y||^2 + M_2\lambda/3)}\right),
\end{equation}

\begin{equation}\label{CL-down}
\mathbb{P}[Y\leq \mathbb{E}Y - \lambda]\leq\exp\left(-\frac{\lambda^2}{2(||Y||^2 - M_1\lambda/3)}\right).
\end{equation}

\end{theorem}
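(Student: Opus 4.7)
The plan is to follow the classical Bernstein--Chernoff exponential moment route, which reduces both tails of $Y$ to a moment generating function estimate on each summand. I will sketch the upper tail (\ref{CL-up}); the lower tail (\ref{CL-down}) then follows by applying the same argument to the random variables $-Y_{i}$, which satisfy $-M_{2}\leq -Y_{i}\leq -M_{1}$, so that the role of $M_{2}$ is taken over by $-M_{1}$ while $\|Y\|^{2}$ is unchanged.

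First I would apply the exponential Markov inequality: for any $t>0$, using independence of the $Y_{i}$,
\begin{equation*}
\mathbb{P}[Y\geq \mathbb{E}Y+\lambda] \leq e^{-t\lambda}\,\mathbb{E}\bigl[e^{t(Y-\mathbb{E}Y)}\bigr] = e^{-t\lambda}\prod_{i=1}^{n} \mathbb{E}\bigl[e^{t(Y_{i}-\mathbb{E}Y_{i})}\bigr].
\end{equation*}
The core of the argument is then a single-variable MGF lemma: whenever $Z$ satisfies $\mathbb{E}Z=0$ and $Z\leq M$, then for $0\leq t<3/M$,
\begin{equation*}
\mathbb{E}[e^{tZ}] \leq \exp\!\left(\frac{t^{2}\mathbb{E}[Z^{2}]}{2(1-tM/3)}\right).
\end{equation*}
I would prove this lemma by Taylor-expanding $e^{tZ}$, applying a moment bound of the form $\mathbb{E}[Z^{k}]\leq \mathbb{E}[Z^{2}]M^{k-2}$ on the contributing terms together with the factorial estimate $k!\geq 2\cdot 3^{k-2}$, and recognising the resulting series as geometric; the bound $1+x\leq e^{x}$ then promotes the estimate to the stated exponential form.

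Applying the lemma to $Z_{i}=Y_{i}-\mathbb{E}Y_{i}$ (so $Z_{i}\leq M_{2}$ and $\mathbb{E}[Z_{i}^{2}]=\operatorname{Var}(Y_{i})\leq \mathbb{E}[Y_{i}^{2}]$) and summing the logarithms across $i$ gives
\begin{equation*}
\mathbb{P}[Y\geq \mathbb{E}Y+\lambda] \leq \exp\!\left(-t\lambda+\frac{t^{2}\|Y\|^{2}}{2(1-tM_{2}/3)}\right).
\end{equation*}
I would then optimise over $t$ with the choice $t=\lambda/(\|Y\|^{2}+M_{2}\lambda/3)$, verify that this $t$ lies in the admissible range $(0,3/M_{2})$, and simplify the resulting expression to match (\ref{CL-up}) exactly.

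The main obstacle is the single-variable lemma, specifically the moment bound $\mathbb{E}[Z^{k}]\leq \mathbb{E}[Z^{2}]M^{k-2}$ under only the one-sided control $Z\leq M$, rather than the textbook two-sided hypothesis $|Z|\leq M$. The standard remedy is to decompose $Z=Z_{+}-Z_{-}$ and exploit that for $t>0$ negative values of $Z$ contribute at most $1$ to $e^{tZ}$, so that it suffices to control moments of the bounded positive part $Z_{+}\leq M$; the constraint $\mathbb{E}Z_{+}=\mathbb{E}Z_{-}$ is then used to link these estimates back to $\mathbb{E}[Z^{2}]$. Once this lemma is secured, the Chernoff step and the optimisation in $t$ are essentially mechanical.
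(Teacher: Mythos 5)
The paper does not give a proof of this theorem; it is cited from Chung and Lu \cite{cl}, under the authors' explicit convention that well-known results are stated without proof. So there is no in-paper argument to compare against. On its own terms your high-level strategy is right — exponential Markov inequality, a per-summand moment-generating-function bound, and optimization in $t$ — and your reduction of (\ref{CL-down}) to (\ref{CL-up}) by passing to $-Y_i$ with $M_2 \mapsto -M_1$ is correct.

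The single-variable lemma, however, does not go through as you sketch it. Under the one-sided hypothesis $Z\leq M$, the term-by-term moment bound $\mathbb{E}[Z^k]\leq \mathbb{E}[Z^2]M^{k-2}$ is false: take $Z=-10$ with probability $1/11$ and $Z=1$ with probability $10/11$, so $\mathbb{E}Z=0$, $M=1$, $\mathbb{E}[Z^2]=10$, yet $\mathbb{E}[Z^4]\approx 910$. Your proposed repair — split $Z=Z_+-Z_-$ and discard the negative part via $e^{-tZ_-}\leq 1$ — throws away exactly the mean-zero cancellation you need, so it does not recover a bound of the required form. The standard device is different and cleaner: write $e^{u}=1+u+u^2g(u)$ with $g(u)=(e^u-1-u)/u^2$, verify that $g$ is increasing on all of $\mathbb{R}$ (so that for $t>0$ and $y\leq M$ one has $e^{ty}\leq 1+ty+y^2g(tM)$ pointwise, with no restriction on how negative $y$ may be), and take expectations \emph{without centering}: $\mathbb{E}[e^{tY_i}]\leq 1+t\mathbb{E}Y_i+\mathbb{E}[Y_i^2]g(tM_2)\leq \exp\bigl(t\mathbb{E}Y_i+\mathbb{E}[Y_i^2]g(tM_2)\bigr)$. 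Then $g(tM_2)\leq \frac{t^2/2}{1-tM_2/3}$ follows from the geometric estimate $k!\geq 2\cdot 3^{k-2}$ (this part of your sketch is fine), and the usual optimization in $t$ gives (\ref{CL-up}). Working uncentered also closes a second gap in your write-up: the claim that $Z_i=Y_i-\mathbb{E}Y_i\leq M_2$ is only valid when $\mathbb{E}Y_i\geq 0$, which is not among the stated hypotheses, whereas the uncentered inequality uses $Y_i\leq M_2$ directly.
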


\section{Branching Processes}
We shall make use of the theory of Galton-Watson branching processes.  In a single-type Galton-Watson branching process, each individual has descendants given by a common distribution, $Z$.  Standard results (see \cite{har}, Chapter 1) show that if the mean of $Z$ is less than $1$, the process dies out eventually while if the mean is greater than $1$, there is a positive probability given by $1-\rho$, that the process survives indefinitely.  In this case, $\rho$ is the unique solution in $[0,1)$ to the equation
\begin{equation}\label{GW-single}
x = \sum_{i=0}^\infty \mathbb{P}[Z=i]x^i.
\end{equation}

For a random intersection graph $G$ with parameters $n$ and $\mathbf{p}=(p_i)_{i=1}^m$, we will associate the Galton-Watson process where descendants are taken from the probability distribution of the degree of a random vertex, $v$, in $G$ (i.e. $\mathbb{P}[Z=k] = \mathbb{P}[d(v)=k]$ for each $k$.)   Lemma \ref{probs} elucidates the relationship between $n,\mathbf{p}$ and the associated Galton-Watson process.

\begin{lemma}\label{probs}
Let $n$, $\mathbf{p}=(p_1, p_2, \ldots, p_{m_1})$ and $\mathbf{q}=(q_1, q_2, \ldots, q_{m_2})$ be given such that there exists a $S\subset [m_2]$ and a one-to-one map $\pi:[m_2]\backslash S\rightarrow [m_1]$ with the properties that
\begin{enumerate}
\item[\textrm{(i)}] $\displaystyle\forall j\in [m_2]\backslash S,\; p_{\pi(j)} = q_j$,
\item[\textrm{(ii)}] $\displaystyle\forall j\in S, \forall i\in[m_1]\backslash \pi\big([m_2]\backslash S\big),\; p_i\geq q_j$,
\item[\textrm{(iii)}] $\displaystyle n\sum_{i=1}^{m_1} p_i^2 = n\sum_{j=1}^{m_2} q_j^2$.
\end{enumerate}
If $\mc{X}$ and $\mc{Y}$ are the Galton-Watson processes associated with $\mathbf{p},\mathbf{q}$ respectively, then the probability that $\mc{X}$ dies out is at least as large as the probability that $\mc{Y}$ dies out.
\end{lemma}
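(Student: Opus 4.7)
Write $g_p(x) = 1 - p(1-(1-p(1-x))^n)$, so that the offspring generating functions of $\mathcal{X}$ and $\mathcal{Y}$ are $f_\mathbf{p}(x) = \prod_{i=1}^{m_1} g_{p_i}(x)$ and $f_\mathbf{q}(x) = \prod_{j=1}^{m_2} g_{q_j}(x)$; by~(\ref{GW-multi}), the extinction probabilities $\rho_\mathbf{p}$ and $\rho_\mathbf{q}$ are the smallest fixed points of $f_\mathbf{p}$ and $f_\mathbf{q}$ in $[0,1)$. The strategy is to prove the pointwise inequality $f_\mathbf{p}(x) \geq f_\mathbf{q}(x)$ on $[0,1]$, which by the standard generating-function argument implies $\rho_\mathbf{p} \geq \rho_\mathbf{q}$: evaluating at $\rho_\mathbf{q}$ gives $f_\mathbf{p}(\rho_\mathbf{q}) \geq \rho_\mathbf{q}$, placing $\rho_\mathbf{q}$ in the region $\{x : f_\mathbf{p}(x) \geq x\}$, which by convexity of $f_\mathbf{p}$ and $f_\mathbf{p}(1)=1$ lies below the smallest positive fixed point of $f_\mathbf{p}$, forcing $\rho_\mathbf{q} \leq \rho_\mathbf{p}$. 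The subcritical case $n\sum p_i^2 \leq 1$ is trivial since both extinction probabilities equal $1$.

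\textbf{Reducing to the unmatched part.} First I would invoke condition~(i): for every $j \in [m_2]\setminus S$, $g_{p_{\pi(j)}}(x) = g_{q_j}(x)$, so these factors cancel from the two products. Writing $I = [m_1]\setminus\pi([m_2]\setminus S)$ for the unmatched $\mathbf{p}$-indices, the target inequality reduces to
\[
\prod_{i\in I} g_{p_i}(x) \;\geq\; \prod_{j\in S} g_{q_j}(x),\qquad x\in[0,1],
\]
subject to $p_i \geq q_j$ for all $i\in I, j\in S$ (by~(ii)) and $\sum_{i\in I} p_i^2 = \sum_{j\in S} q_j^2$ (by~(iii)); these two constraints also force $|I| \leq |S|$, so $\mathbf{p}$ packs the same second-moment mass into fewer, larger atoms. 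I would reduce further to a pair-merging sub-lemma: if $p^2 = q_a^2 + q_b^2$ and $p \geq q_a, q_b$, then $g_p(x) \geq g_{q_a}(x)g_{q_b}(x)$ on $[0,1]$. The multiset $\{q_j\}_{j\in S}$ can be transformed into $\{p_i\}_{i\in I}$ by a finite chain of such merges---at each step combining the two smallest remaining atoms, so that every intermediate value is dominated by the $p_i$'s via~(ii)---and iterating the sub-lemma along the chain delivers the required product inequality.

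\textbf{Core sub-lemma and main obstacle.} The heuristic comes from the expansion
\[
\log g_p(x) \;=\; -np^2(1-x) + \binom{n}{2}p^3(1-x)^2 + O\!\bigl(p^4(1-x)^3\bigr),
\]
in which the leading term is invariant under the merge $p^2 = q_a^2+q_b^2$, while the second-order term strictly favours the consolidated side because $p^3 = (q_a^2+q_b^2)^{3/2} \geq q_a^3+q_b^3$ for nonnegative $q_a, q_b$. The main obstacle is promoting this leading-order observation to a rigorous pointwise bound on all of $[0,1]$, since the higher-order Taylor terms have alternating signs and must be controlled uniformly (the setting of Theorem~\ref{sup-main} forces each $p_i$ to be small, which is what makes this control tractable). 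Two natural routes are (a) a direct algebraic manipulation of $g_p(x) - g_{q_a}(x)g_{q_b}(x)$ exploiting the product form $(1-p(1-x))^n = \prod_{k=1}^n(1-p(1-x))$ to factor the difference, or (b) constructing a coupling of the offspring variable $Y_p = I_p B_p$ (with $I_p \sim$ Bernoulli$(p)$ and $B_p \sim \Bin(n,p)$) with the sum $Y_{q_a}+Y_{q_b}$ that establishes the convex-order inequality $Y_{q_a}+Y_{q_b} \leq_{\mathrm{cx}} Y_p$, which immediately yields $\E[x^{Y_{q_a}+Y_{q_b}}] \leq \E[x^{Y_p}]$ for $x\in[0,1]$. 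Once the sub-lemma is in hand, the remaining steps---the cancellation of the matched factor and the fixed-point deduction---are routine.
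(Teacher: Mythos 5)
Your approach is genuinely different from the paper's. The paper works with the single-type Galton--Watson process whose offspring law is the degree distribution, writes $x_i=\mathbb{P}[X=i]$, $y_i=\mathbb{P}[Y=i]$, and asserts (without proof) a single-crossing property: $x_0\geq y_0$ and there exists $l>0$ with $x_i\leq y_i$ for $0<i\leq l$ and $x_i\geq y_i$ for $i>l$. Combined with the equality of means (condition (iii)), this makes $h=f-g$ satisfy $h(0)\geq 0$, $h(1)=0$, $h'(z)\leq h'(1)=0$, so $h\geq 0$ on $[0,1]$, and then the fixed-point argument runs as you describe. Your route instead factors the multi-type PGF as $\prod_i g_{p_i}(x)$ and reduces to a merge inequality $g_p(x)\geq g_{q_a}(x)g_{q_b}(x)$ when $p^2=q_a^2+q_b^2$. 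This is a clean and arguably more transparent decomposition, and you also correctly observe that the cancellation of matched factors and the final fixed-point step are routine.

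However there are two concrete gaps. First, the pair-merging sub-lemma is not proven, and you should be aware that it is \emph{false} without a smallness restriction on the parameters: for $n=1$ one has $g_p(x)=1-p^2(1-x)$, so with $p^2=q_a^2+q_b^2$ one computes $g_p(x)-g_{q_a}(x)g_{q_b}(x)=-q_a^2q_b^2(1-x)^2\leq 0$, the opposite inequality. More generally $t\mapsto\log g_{\sqrt t}(x)$ is concave near $t=0$ when $n=1$ and convex only when $n$ is large relative to the $p$'s. You flag this (``higher-order Taylor terms must be controlled''), but this is precisely the content of the sub-lemma and cannot be deferred; the lemma as stated is in fact only correct in the asymptotic regime where the $p_i$ are small, and any proof has to surface that. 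Second, the chain-of-merges reduction does not work as described: repeatedly combining pairs of the $q_j$ will not in general produce the multiset $\{p_i\}_{i\in I}$, because $\{q_j^2\}_{j\in S}$ need not admit a partition into blocks summing to the $p_i^2$ (e.g. $\{q_j^2\}=\{1,1,1\}$ and $\{p_i^2\}=\{1.5,1.5\}$). What does hold, and what your argument is implicitly reaching for, is that conditions (ii)--(iii) force $\{p_i^2\}_{i\in I}$ (padded with zeros) to \emph{majorize} $\{q_j^2\}_{j\in S}$; one then needs convexity of $t\mapsto\log g_{\sqrt t}(x)$ in $t$ (for fixed $x\in[0,1]$) and Karamata's inequality to conclude $\sum_{i\in I}\log g_{p_i}(x)\geq\sum_{j\in S}\log g_{q_j}(x)$. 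Your pair-merge sub-lemma is a special case of this convexity, but the reduction should go through majorization rather than literal merges, and the convexity itself is exactly the unproven core, valid only when the parameters are small.
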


\begin{proof}
Let $X$ and $Y$ be the degree distributions for the degrees of an arbitrary vertex in the random intersection graphs with parameters $n,\mathbf{p}$ and $n,\mathbf{q}$ respectively. Then $\mc{X}$ and $\mc{Y}$ are the Galton-Watson processes where each generation is chosen independently from $X$, respectively $Y$.  Writing $x_i=\mathbb{P}[X=i]$ and $y_i=\mathbb{P}[Y=i]$ it is easy to see that 
\begin{equation}\label{mean}
\sum_{i=1}^n i x_i = \sum_{j=1}^n j y_j \,, 
\end{equation}
which is equivalent to condition (iii).  Moreover, conditions (i)-(iii) imply that $x_0\geq y_0$ and that there exists an $l>0$ with
\begin{equation}\label{bounded-pgf}
\forall i, 0<i\leq l,\;\;x_i\leq y_i \text{ and } \forall i, i>l,\;\;x_i\geq y_i.
\end{equation}
It is now easy to show that $f(z)=\sum_{i=0}^\infty x_i z^i$ dominates $g(z)=\sum_{j=0}^\infty y_j z^j$ on the interval $[0,1]$.  
Indeed writing $h(z)=f(z)-g(z)$ we have $h(0) = x_0 - y_0 \geq 0$ while $h(1)=0$. Then Equation (\ref{mean}) and Statement (\ref{bounded-pgf}) imply that for $0<z<1$, $h'(z)\leq h'(1)=0$.  That is, $h$ is decreasing on $[0,1]$, hence $h(z)\geq0$ for $z\in[0,1]$. In particular if the expected values $f'(1)=g'(1)$ are greater than 1, then there is a non-zero probability $1-\rho$ that the process $\mc{Y}$ survives.  In this case, $\rho$ satisfies the equation $\rho = g(\rho)$.  Then $f(\rho)\geq \rho$ which implies that the solution to the equation $z = f(z)$ is at least $\rho$. Then the probability that $\mc{X}$ dies is at least as large as the probability that $\mc{Y}$ dies out.
\end{proof}

%%%%
% Multi type GW Processes
%%%%
\subsection{Multi-type Galton Watson Processes}

It will be convenient to consider multi-type Galton-Watson processes as well.  For a random intersection graph with parameters $n$ and $\mathbf{p}=(p_i)_{i=1}^m$, we associate the following $m+1$ type Galton-Watson process.  
%We think of individuals of type $0$ as related to the vertices in the associated random bipartite graph $B$, while all other individuals in this process relate to attributes of $B$.  
%
Individuals of type $0$ relate to the vertices in the associated random bipartite graph $B$, while all other individuals in this process relate to attributes of $B$.  
Moreover, individuals of type $0$ can have offspring of each of the types $1,2,\ldots, m$; the amount is taken from Bernoulli($p_i$) respectively. Individuals of types $i=1,2, \ldots, m$ only have offspring of type $0$, where the amount is taken from the distribution $\Bin(n, p_i)$. The process starts with one individual of type $0$. 
We review standard results \cite{har} which imply that if $n\sum_i p_i^2=c>1$ then this multi-type process survives with positive probability 1-$\rho$ where $\rho$ is given by the unique solution in $(0,1)$ to Equation~(\ref{GW-multi}).  Note that for a given parameter set, the associated single-type Galton-Watson process and multi-type Galton-Watson process have the same probabilities of survival and extinction.

%% Appendix Inserted Here

Consider a general multi-type Galton-Watson process with $m+1$ types labeled $0,1,\ldots,m$. For
each positive integer $N$ and each type $i,$ define $f_{N}^{i}(x_{0},x_{1},\ldots,x_{m})$
to be the generating functions for the descendants at time $N$ given
that the process started with exactly one individual of type $i$.
That is, let $p_{N}^{i}(r_{0},r_{1},\ldots,r_{m})$ represent the
probability that the process starting with one individual of type
$i$ will in the $N^{\text{th}}$ generation have $r_{0},r_{1},\ldots,r_{m}$
offspring of types $0,1,\ldots,m$ respectively. Then the generating
functions can be expressed as
\begin{equation*}
f_{N}^{i}(x_{0},x_{1},\ldots,x_{m})=\overset{\infty}{\underset{r_{0}=0}{\sum}}\;
\overset{\infty}{\underset{r_{1}=0}{\sum}}\cdots
\overset{\infty}{\underset{r_{0}=m}{\sum}}
p_{N}^{i}(r_{0},r_{1},\ldots,r_{m})x_{1}^{r_{1}}x_{2}^{r_{2}}\cdots x_{m}^{r_{m}}.
\end{equation*}
Writing $\mathbf{x}=(x_{0},x_{1},\ldots,x_{m})$ and $\mathbf{f}_{N}(\mathbf{x})=(f_{N}^{0}(\mathbf{x}),f_{N}^{1}(\mathbf{x}),\ldots,f_{N}^{m}(\mathbf{x}))$,
we have
\begin{equation}
f_{N+1}^{i}(\mathbf{x})=f_{N}^{i}(\mathbf{f}_{1}(\mathbf{x})).\label{eq:i}\end{equation}
From the definition, $f_{N}^{i}(\mathbf{0})$ is exactly the probability
of extinction by the $N^{\text{th}}$ generation if the process starts
with one individual of type $i$. Thus $f_{N+1}^{i}(\mathbf{0})\geq f_{N}^{i}(\mathbf{0})$
and in particular, $\lim f_{N}^{i}(\mathbf{0})$ exists and is less
than or equal to 1. Writing $q_{i}=\lim f_{N}^{i}(0)$ we see that
$\mathbf{q}=(q_{0},q_{1},\ldots,q_{m})$ is a solution to the equation\begin{equation}
\mathbf{f}_1(\mathbf{q})=\mathbf{q}.\label{eq:equation}\end{equation}

Let $m_{ij}$ be the expected number of offspring of type $j$ from
an individual of type $i$ and let $\mathbf{M}=(m_{ij})$ be the matrix
of these first moments. Suppose $\mathbf{s}$ is a vector with $|1-s_{i}|\leq1$
for each $i$. Then from Taylor's theorem with remainder we have 
\begin{equation*}
\mathbf{f}_{N}(\mathbf{1}-\mathbf{s})=\mathbf{1}-\mathbf{M}^{N}\mathbf{s}+o(|\mathbf{s}|)\;\;\;\;\;\;|\mathbf{s}|\rightarrow\mathbf{0}
\end{equation*}

Assume that for any such vector $\mathbf{s}$, there exists an $N_{0}$ with $|\mathbf{M}^{N_{0}}\mathbf{s}|>2|\mathbf{s}|$.  Then it follows that there exists a nonnegative solution different from
$\mathbf{1}$ to Equation (\ref{eq:equation}). Indeed fix $\epsilon>0$.
If $\mathbf{q}=\mathbf{1}$, then there exists a sufficiently large
$N$ such that $|\mathbf{1}-\mathbf{f}_{N}(\mathbf{0})|<\epsilon$.
Using $\mathbf{s}=\mathbf{1}-\mathbf{f}_{N}(\mathbf{0})$ we conclude
that \begin{equation*}
|\mathbf{1}-\mathbf{f}_{N+N_{0}}(\mathbf{0})|=|\mathbf{1}-\mathbf{f}_{N_{0}}(\mathbf{f}_{N}(\mathbf{0}))|>|\mathbf{1}-\mathbf{f}_{N}(\mathbf{0})|,\end{equation*}
a contradiction to the fact that $\mathbf{q}-\mathbf{f}_{N}(0)\rightarrow\mathbf{0}$
monotonically as $N\rightarrow\infty$. 

If, in addition to the above assumption, we also assume that $q_{i}<1$
for all $i$, it follows that if $\mathbf{q}_{1}$
is any vector in the unit cube not equal to $\mathbf{1}$, we have
$\mathbf{f}_{N}(\mathbf{q}_{1})\rightarrow\mathbf{q}$ as $N\rightarrow\infty$.
(For a proof, see II.7.2 in \cite{har}.) On the other hand, if there exist
two types, $i$ and $j$ with $q_{i}=1$ and $q_{j}\not=1$, then
Equations (\ref{eq:i}) and (\ref{eq:equation}) imply that for all
all $N$, $1=f_{N}^{i}(\mathbf{q})$. In particular $f_{N}^{i}(\mathbf{x})$
is thus independent of $x_{j}$ which implies that $(\mathbf{M}^{N})_{ij}=0$
for all $N.$

\begin{corollary}\label{GW}
If $n$ and $\mathbf{p}$ are given with $n\sum p_i^2=c>1$ then the associated multi-type Galton-Watson process, as defined above, survives with probability $1-\rho$ where $\rho$ is the unique solution in $[0,1)$ to Equation (\ref{GW-multi}).
\end{corollary}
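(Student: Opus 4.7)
The plan is to specialize the multi-type branching machinery developed immediately above to the $(m+1)$-type process attached to a random intersection graph with parameters $n$ and $\mathbf{p}$, and then reduce the vector fixed-point equation $\mathbf{f}_1(\mathbf{q})=\mathbf{q}$ to the scalar Equation~(\ref{GW-multi}). First I would write down the one-step generating functions. A type-$0$ individual independently produces an offspring of each type $i\in[m]$ with probability $p_i$, giving
\begin{equation*}
f_1^0(\mathbf{x}) \;=\; \prod_{i=1}^{m}\bigl(1 - p_i + p_i x_i\bigr),
\end{equation*}
while a type-$i$ individual with $i\geq 1$ produces $\Bin(n,p_i)$ offspring of type $0$, giving $f_1^i(\mathbf{x}) = (1 - p_i + p_i x_0)^n$. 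Substituting the second relation into the first collapses the system to the single scalar equation
\begin{equation*}
q_0 \;=\; \prod_{i=1}^{m}\Bigl(1 - p_i\bigl(1 - (1 - p_i(1-q_0))^n\bigr)\Bigr),
\end{equation*}
which is exactly Equation~(\ref{GW-multi}) evaluated at $x=q_0$; the candidate extinction probability $\rho$ will therefore be $q_0$.

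Next I would verify the growth hypothesis $|\mathbf{M}^{N_0}\mathbf{s}|>2|\mathbf{s}|$ invoked in the paragraph preceding the corollary. The first-moment matrix satisfies $m_{0i}=p_i$ and $m_{i0}=np_i$ for $i\geq 1$, with all other entries zero. Solving $\lambda v_0=\sum_i p_iv_i$ together with $\lambda v_i=np_iv_0$ gives $\lambda^2=n\sum p_i^2=c$, so the Perron eigenvalue of $\mathbf{M}$ is $\sqrt{c}>1$ and iteration eventually doubles $|\mathbf{M}^N\mathbf{s}|$ for every nonzero nonnegative $\mathbf{s}$ in the unit cube. The argument reviewed just above then yields a fixed point $\mathbf{q}\neq\mathbf{1}$ of $\mathbf{f}_1$. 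After deleting any attributes with $p_i=0$, every remaining type communicates with type $0$ in one step, so $\mathbf{M}^2$ is strictly positive everywhere; this rules out the mixed situation in which some $q_i=1$ while $q_j<1$ (which would force $(\mathbf{M}^N)_{ij}=0$), so in fact $q_i<1$ for every $i$. Theorem II.7.2 of \cite{har} then gives $\mathbf{f}_N(\mathbf{q}_1)\to\mathbf{q}$ for every $\mathbf{q}_1\neq\mathbf{1}$ in the unit cube, making $\mathbf{q}$ the unique such fixed point; in particular $q_0$ is the extinction probability starting from one type-$0$ individual.

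The main obstacle I expect is the uniqueness statement in $[0,1)$ at the scalar level, since Equation~(\ref{GW-multi}) always has the trivial root $x=1$, and one needs exactly one further root below it. This can be deduced from Theorem II.7.2 as above, or more elementarily by observing that the right-hand side $g(x)$ of Equation~(\ref{GW-multi}) is itself a probability generating function, namely the PGF of the number of type-$0$ grand-offspring of a type-$0$ ancestor, and is therefore convex and nondecreasing on $[0,1]$. Combined with $g(0)>0$, $g(1)=1$, and $g'(1)=n\sum p_i^2=c>1$, convexity forces $g(x)-x$ to vanish at exactly one point of $[0,1)$, which must coincide with $\rho=q_0$ and gives the claimed survival probability $1-\rho$.
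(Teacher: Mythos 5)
Your overall route matches the paper's: write down the one-step generating functions, collapse the fixed-point system to Equation~(\ref{GW-multi}), and then verify the growth and irreducibility hypotheses needed to identify the sub-$\mathbf{1}$ fixed point with the extinction probabilities. The Perron-eigenvalue calculation ($\lambda^2 = n\sum p_i^2 = c$, so $\lambda = \sqrt{c}>1$) is a clean substitute for the paper's direct observation that $(\mathbf{M}^{2N})_{00}=c^N$, and your final paragraph's convexity argument for scalar uniqueness is a nice elementary alternative to invoking Theorem II.7.2 of Harris.

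However, the claim that ``$\mathbf{M}^2$ is strictly positive everywhere'' is false, and the inference leading to it is the problem. The process has a bipartite (period-$2$) type structure: $m_{ij}=0$ whenever $i,j\geq 1$ or $i=j=0$, so the only nonzero entries of $\mathbf{M}$ are $m_{0i}=p_i$ and $m_{i0}=np_i$. Consequently $(\mathbf{M}^2)_{0j}=\sum_k m_{0k}m_{kj}=m_{00}m_{0j}=0$ for every $j\geq 1$, and symmetrically $(\mathbf{M}^2)_{i0}=0$ for $i\geq 1$. Knowing that every type communicates with type~$0$ in one step only gives you two-step paths $i\to 0\to j$ for $i,j\geq 1$, and $0\to k\to 0$; it gives no two-step path from $0$ to $j\geq 1$. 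In fact no fixed power of $\mathbf{M}$ is strictly positive. What your argument actually needs --- and what the paper verifies --- is the weaker statement that for each pair $(i,j)$ there exists \emph{some} $N$ (of the appropriate parity) with $(\mathbf{M}^N)_{ij}\neq 0$: odd $N$ when exactly one of $i,j$ is $0$, even $N$ otherwise. That suffices to rule out the mixed situation $q_i=1$, $q_j<1$, and the rest of your argument goes through once the erroneous positivity claim is replaced by this parity-aware statement.
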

\begin{proof}
Without loss of generality, we suppose the $p_i$ are all nonzero.  Note that the probability generating functions associated with the multi-type Galton-Watson process are
\begin{equation}
f_1^i(\mathbf{x}) = 
\begin{cases}
(1-p_{i}+x_{0}p_{i})^{n} &\text{if $i>0$}\\
\prod_{i=1}^{m}(1-p_{i}+x_{i}p_{i}) &\text{if $i=0$}.
\end{cases}
\end{equation}
Thus the solution to Equation (\ref{eq:equation}) is exactly given by Equation (\ref{GW-multi}).

To show that this gives the extinction probability of the branching process, it remains to verify the following two assumptions.  First, that for any vector $\mathbf{s}$ sufficiently close to $\mathbf{1}$, there exists an $N$ with $|\mathbf{M}^{N}\mathbf{s}|>2|\mathbf{s}|$.  Secondly,
that for each pair $i,j$ there exists an $N$ such that $(\mathbf{M}^N)_{ij} \not=0$.

Note that for $i,j>0,\; m_{ij}=0$, while for $i>0,\; m_{0i}=p_{i}$,
$m_{i0}=np_{i}$ and for all $i,\; m_{ii}=0$. 
As $n\sum p_{i}^{2}=c>1$ then we have $(\mathbf{M}^{2N})_{00}=c^{N}$
which clearly implies the first statement. Secondly, it is not hard to check
that $(\mathbf{M}^{2k+1})_{ij}\not=0$ when exactly one of $i,j$
are equal to 0. On the other hand, $(\mathbf{M}^{2k})_{ij}\not=0$
when $i,j>0$ and when $i=j=0$. Thus the second assumption above
also holds and we can conclude that the unique solution in $(0,1)$
to Equation (\ref{GW-multi}) is indeed the probability of extinction when the process starts with one
individual of type 0.
\end{proof}

%% End Inserted Appendix

Finally, we show here how to use Equation (\ref{GW-multi}) to derive the size of the largest component in the supercritical phase for the uniform homogeneous case.

\begin{proposition}\label{uniform-case}
Let $m=m(n)$ be a sequence of integers indexed by $n$ and let $c>1$ be given. Define $p=\sqrt{c/mn}$. Then the associated $m+1$ type Galton-Watson process eventually dies out with probability given by

\begin{equation}
\rho=
\begin{cases}
1-(1-\zeta) mp &\text{if $m=o(n)$}\\
\zeta &\text{if $n=o(m)$}\\
\zeta^* &\text{if $m=\Theta(n)$},
\end{cases}
\end{equation}
where $\zeta$ and $\zeta^*$ are the unique solutions in $(0,1)$ to the equations $\exp\left(c(x-1)\right)=x$ and $\exp\left(mp\exp\left(np(x-1)\right)-1\right)=x$, respectively.
\end{proposition}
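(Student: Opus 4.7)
The plan is to specialize Equation~(\ref{GW-multi}) to the uniform case $p_i\equiv p=\sqrt{c/mn}$, reducing it to the single fixed-point equation
\begin{equation*}
\rho = F(\rho) := \bigl[1 - p\bigl(1-(1-p(1-\rho))^n\bigr)\bigr]^m.
\end{equation*}
Since $mnp^2 = c>1$, Corollary~\ref{GW} guarantees a unique solution $\rho\in[0,1)$. Moreover $F$ is increasing and strictly convex on $[0,1]$ with $F(1)=1$ and $F'(1)=c>1$, so the non-trivial fixed point is transverse and depends continuously and monotonically on perturbations of $F$. In each regime I will exhibit a candidate $\tilde\rho$ of the claimed form, verify $\tilde\rho = F(\tilde\rho)(1+o(1))$ on the correct scale, and then conclude by transversality that $\rho$ agrees with $\tilde\rho$ to the stated leading order.

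The workhorse approximations are $\log(1-y)=-y(1+O(y))$ and $(1-y)^N = e^{-Ny}(1+O(Ny^2))$, valid whenever $y\to 0$ and $Ny^2\to 0$. The three regimes differ only in the scalings of $np$ and $mp$, which determine which of the two nested exponential approximations are non-degenerate.

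For $m=o(n)$ one has $np\to\infty$ and $mp\to 0$, so $\rho\to 1$ and the correct scale is $1-\rho\asymp mp$. Writing $1-\rho = (1-\zeta)mp$, one obtains $p(1-\rho) = c(1-\zeta)/n\to 0$ with $np(1-\rho)\to c(1-\zeta)$, hence $(1-p(1-\rho))^n \to e^{-c(1-\zeta)}$; then $F(\rho) = 1 - mp\bigl(1-e^{-c(1-\zeta)}\bigr) + o(mp)$, and matching against $1 - (1-\zeta)mp$ yields the defining equation $\zeta = e^{-c(1-\zeta)}$. For $n=o(m)$ one has $np\to 0$ and $mp\to\infty$, so $1-(1-p(1-\rho))^n = np(1-\rho)(1+o(1))$, giving $F(\rho) = \exp\bigl(-mnp^2(1-\rho)(1+o(1))\bigr) = e^{-c(1-\rho)}(1+o(1))$ and hence $\rho\to\zeta$. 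For $m=\Theta(n)$ both $np$ and $mp$ are $\Theta(1)$ and neither exponential degenerates, leaving
\begin{equation*}
\rho = \exp\bigl(mp(e^{-np(1-\rho)}-1)\bigr)(1+o(1)),
\end{equation*}
whose unique solution in $[0,1)$ converges to $\zeta^*$.

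The main technical obstacle is propagating the $o(\cdot)$ error terms from the approximate equation back to the solution itself, especially in the first regime where $\rho\to 1$ and the natural variable degenerates. I handle this by reparameterising on the correct scale (the rescaled variable $u := (1-\rho)/mp$ for $m=o(n)$, and $\rho$ itself in the other two regimes), on which the approximate fixed-point equations have non-degenerate limits and the transversality $F'(1)=c>1$ ensures that small additive perturbations of the equation produce proportionally small perturbations of the solution. The convexity of $F$ together with the fact that the non-trivial fixed point of a convex increasing map is the unique simple crossing of the diagonal then lets me conclude that $\rho$ coincides with the stated expression up to lower-order terms in each regime.
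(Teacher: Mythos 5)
Your proposal is correct and follows the same route as the paper: reduce Equation~(\ref{GW-multi}) to the uniform fixed-point equation $\rho = F(\rho) = [1-p(1-(1-p(1-\rho))^n)]^m$ and verify by direct asymptotic expansion, in each of the three scaling regimes, that the stated candidate solves it to leading order (using $(1-y)^N = e^{-Ny}(1+O(Ny^2))$ and matching on the appropriate scale). You go one step further than the paper's proof by making explicit \emph{why} this verification suffices: $F$ is increasing and convex with $F(1)=1$ and $F'(1)=n\sum p_i^2=c>1$, so the subunit fixed point is a transverse crossing of the diagonal and is stable under the small perturbations introduced by the asymptotic approximations (after reparametrizing by $u=(1-\rho)/mp$ in the regime where $\rho\to 1$); the paper leaves this final step implicit.
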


\begin{proof}
In each case we will use the fact that $1-p=\exp(-p-o(p))$.

When $mp=o(1)$, letting $\rho=1-(1-\zeta) mp$, we have
\begin{align}
\big[ 1-p(1-(1-p(1-\rho))^n) \big]^m &= \left [ 1-p\left(1-e^{-np(1-\rho)(1+o(1))}\right)\right]^m \notag\\
&= \left [ 1-p\left(1-e^{-c(1-\zeta)(1+o(1))}\right)\right]^m\notag\\
&= \left [ 1-p\big(1-\zeta(1-o(1))\big)\right]^m  \label{behrisch-2}\\
&= \exp\left[-mp\big(1-\zeta(1-o(1))\big) (1+o(1))\right] \notag\\
&= 1-\big(1-\zeta(1-o(1))\big) mp(1+o(1)) \rightarrow \rho. \notag
\end{align}

Secondly, if $np=o(1)$ then we have
\begin{align}
\big[ 1-p(1-(1-p(1-\zeta))^n) \big]^m &= \left [ 1-p\left(1-e^{-np(1-\zeta)(1+o(1))}\right)\right]^m \notag\\
&= [1-np^2(1-\zeta)(1+o(1))]^m \label{behrisch-1} \notag\\
&= \exp\big(-c(1-\zeta)(1+o(1))\big) \rightarrow \zeta \,.
\end{align}

Finally, if $n=\Theta(m)$, then $np$ and $mp$ are constants. We have
\begin{align}
&\prod_{i=1}^m \big[ 1-p(1-(1-p(1-\zeta^*))^n) \big] \notag\\
&\leq \left [ 1-p\left(1-e^{-np(1-\zeta^*)(1+o(1))}\right)\right]^m \notag\\
&= \exp\left(mp\left(e^{np(\zeta^*-1)(1+o(1))}-1\right)(1-o(1))\right).\label{lag-lin}
\end{align}
\end{proof}

Note if $n$ is replaced with $n(1-o(1))$ and $m$ with $m(1-o(1))$, the asymptotic results are the same.

\section{Proofs of Main Theorems}
%In the proofs of Theorems \ref{sub-main} - \ref{sup-main-old} we will explore or discover the random intersection graph in the fol
%
% Discovery Process
%

\subsection{Discovery Process}\label{disc}
For a random intersection graph $G$, we define the following discovery process.  Let $B$ be the bipartite graph associated to $G$ and let $v_1$ be a vertex (as opposed to an attribute) of $B$.   For $i=0,1,2,\ldots$ inductively define sets of unsaturated vertices, discovered vertices and discovered attributes, denoted by $U_i, V_i, A_i$, respectively.  Initially set $A_0=\emptyset,$ and $V_0=U_0=\{v_1\}$.  At step $i$, if $U_{i-1}$ is empty the process terminates.  Otherwise pick $v_i\in U_{i-1}$.  Let $A_i'$ denote the set of attributes connected to $v_i$ in $A\backslash A_{i-1}$.  Thus $A_i'$ is the set of newly discovered attributes.  Discover next the vertices, $V_i'$ of $V\backslash V_{i-1}$ connected to at least one attribute in $A_i'$.  Let $X_i$ denote the cardinality of $V_i'$.  Note again that $X_i$ is the number of newly discovered vertices.  Define the sets 
\begin{align*}
A_i &= A_{i-1}\cup A_i' \\
V_i &= V_{i-1}\cup V_i' \\
U_i &= (U_{i-1}\backslash \{v_i\})\cup V_i' \,.
\end{align*}
A vertex or an attribute can only be discovered once.  Crucially, the event that the vertex $v_i$ is connected to an attribute $a\in A\backslash A_{i-1}$ is independent of the history of the discovery process.  Similarly, the event that an attribute $a\in A_i'$ is connected to a vertex $v\in V\backslash V_{i-1}$ is independent of the history of the discovery process.

%
% Subcritical phase
%
\subsection{Subcritical phase}\label{sub-sec}

\begin{proof}[Proof of Theorem \ref{sub-main}]
Let $\mathbf{p}=(p_i)_{i=1}^{m}$ be given such that $n\sum p_i^2<1$.  Let $G$ be a random intersection graph obtained from $\mathbf{p}$ and set $p=\max p_i$.  Consider the discovery process of $G$: if $A_i'$ is known but $V_i$ has not yet been discovered, define the random variables $W_i = \sum_{j\in A_i'}p_j$ and $X_i^+\sim \Bin(n, W_i)$.  $W_i$ can be thought of as the weight of the attributes associated to the vertex $v_i$. Note that $X_i^+$  stochastically dominate $X_i$, as $$1-\prod_{j\in A_i'} (1-p_j)\leq \sum_{j\in A_i'} p_j.$$

%Now that we have provided the setup, the proof follows from the following three claims.
The proof now follows from the following three claims.
\begin{claim}\label{sub-step1}
$\sum_{i=1}^{k}X_{i}$ is stochastically dominated by $X_{(k)}^{+}\sim \Bin(n,\sum_{i=1}^{k}W_{i})$.
\end{claim}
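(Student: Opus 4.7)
The proof will use a thinning coupling that exploits the Markovian structure of the discovery process. First I would verify the one-step conditional distribution: given the history $\mathcal{F}_{i-1}$ through step $i-1$ and the newly revealed attribute set $A_i'$ (which determines $W_i$ and $q_i := 1 - \prod_{a\in A_i'}(1-p_a)$), each of the $N_i := n - |V_{i-1}|$ undiscovered vertices in $V\setminus V_{i-1}$ independently lands in $V_i'$ with probability $q_i$, because the relevant vertex-attribute edges are queried for the first time at step $i$. Thus $X_i \mid \mathcal{F}_{i-1},A_i' \sim \Bin(N_i,q_i)$, with $N_i \le n-1$ and, by the union bound, $q_i \le W_i$ (as already observed just before the claim).

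I would then realize the process on a product space equipped with independent uniforms $\{U_{u,i} : u\in V\setminus\{v_1\},\; i\le k\}$: at step $i$, after $A_i'$ is revealed, set $V_i' := \{u\in V\setminus V_{i-1} : U_{u,i} \le q_i\}$; this preserves the correct conditional law of $V_i'$. The payoff of the coupling is that any vertex $u$ entering $V_k$ does so at some step $i^*\le k$, at which moment $U_{u,i^*} \le q_{i^*} \le W_{i^*}$; therefore pointwise,
\[
\sum_{i=1}^{k} X_i \;=\; |V_k\setminus\{v_1\}| \;\le\; \sum_{u\ne v_1} \mathbf{1}\!\bigl[\exists\, i\le k:\; U_{u,i}\le W_i\bigr].
\]

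To finish, I would dominate the right-hand side by $X_{(k)}^+$. Processing steps sequentially, $W_i$ is $\mathcal{F}_i$-measurable while the step-$i$ uniforms $U_{\cdot,i}$ are independent of $\mathcal{F}_i$; hence conditional on $\mathcal{F}_i$ the indicators $\mathbf{1}[U_{u,i}\le W_i]$ are i.i.d.\ Bernoulli$(W_i)$ across $u\ne v_1$. Assembling the contributions of all $k$ steps, the probability that any fixed $u$ is counted on the right is $1-\prod_i(1-W_i)$, and different $u$'s are independent, so the right-hand side has conditional law $\Bin\!\bigl(n-1,\,1-\prod_i(1-W_i)\bigr)$ given $W_1,\dots,W_k$. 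The elementary inequality $1-\prod_i(1-W_i) \le \sum_i W_i$ then yields
\[
\sum_{i=1}^{k} X_i \;\le_{st}\; \Bin\!\Bigl(n-1,\,1-\textstyle\prod_i(1-W_i)\Bigr) \;\le_{st}\; \Bin\!\Bigl(n,\,\textstyle\sum_i W_i\Bigr) \;=\; X_{(k)}^+.
\]

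The main subtlety is that globally conditioning on $(W_1,\ldots,W_k)$ does not render the $U_{u,i}$'s i.i.d.\ uniform, since $W_j$ for $j>i$ depends on the uniforms $U_{\cdot,i}$ through the discovery of $V_i'$ and the subsequent choice of $v_j$. I would therefore keep the argument sequential: the binomial bound is assembled one step at a time using the fresh uniforms $U_{\cdot,i}$, which are independent of $\mathcal{F}_i$ even though they feed into the $\sigma$-algebra at later steps.
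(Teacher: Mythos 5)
Your coupling and the pointwise inequality $\sum_{i=1}^k X_i \le \sum_{u\neq v_1}\mathbf{1}[\exists\, i\le k:\,U_{u,i}\le W_i]$ are fine, but the step where you assert that the right-hand side ``has conditional law $\Bin(n-1,\,1-\prod_i(1-W_i))$ given $W_1,\dots,W_k$'' is actually false, and the ``keep the argument sequential'' remark does not repair it. You correctly identified the danger (the step-$i$ uniforms feed forward into $W_j$ for $j>i$), but then you drew the very conclusion that this dependence forbids. Here is a concrete way to see the failure: whenever $W_1,\dots,W_k$ are all defined (i.e.\ the discovery process ran for $k$ full steps), the vertices $v_1,\dots,v_k$ are distinct members of $V_k$, so $\sum_{i\le k}X_i=|V_k|-1\ge k-1$ deterministically; hence your right-hand side, which dominates $\sum X_i$ pointwise, is also $\ge k-1$ with conditional probability one, whereas a genuine $\Bin(n-1,1-\prod_i(1-W_i))$ variable has $\Pr[\,\cdot\ge k-1\,]<1$ in the regime of interest. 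So the conditional law cannot be the one you claim. The step-by-step fact that $\mathbf{1}[U_{u,i}\le W_i]$ is Bernoulli$(W_i)$ given $\mathcal F_i$ does not ``assemble'' into a conditional binomial given $(W_1,\dots,W_k)$, precisely because conditioning on the full vector $(W_1,\dots,W_k)$ is a different and much stronger conditioning than conditioning on $\mathcal F_i$ at each step in turn.

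What the application (Claims~\ref{sub-step2} and \ref{sub-step3}) actually requires is the joint-event bound
\[
\Pr\Bigl[\textstyle\sum_{i=1}^k X_i\ge k-1,\ \sum_{i=1}^k W_i\le w\Bigr]\;\le\;\Pr\bigl[\Bin(n,w)\ge k-1\bigr]
\]
for a deterministic threshold $w$, not a conditional stochastic domination given $(W_1,\dots,W_k)$, which is too strong and false. A coupling that delivers this cleanly uses a \emph{single} uniform $U_u$ per vertex rather than one per $(u,i)$: at step $i$, given $A_i'$, declare $u\in V\setminus V_{i-1}$ newly discovered iff $1-\prod_{j<i}(1-q_j)<U_u\le 1-\prod_{j\le i}(1-q_j)$. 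This reproduces the correct conditional law, and then $\sum X_i=\#\{u\ne v_1:U_u\le T\}$ with $T=1-\prod_{i\le k}(1-q_i)\le\sum_i W_i$ pointwise. On the event $\{\sum W_i\le w\}$ one has $T\le w$, hence $\sum X_i\le\#\{u:U_u\le w\}$, and the latter \emph{is} $\Bin(n-1,w)$ because $w$ is a constant and the $U_u$'s are i.i.d.\ and were never conditioned on. Your fresh-uniforms-per-step coupling cannot be pushed through in the same way because the event $\{\exists\, i:U_{u,i}\le W_i\}$ has no pointwise comparison with a single threshold event. (For what it is worth, the paper's own one-line proof of this Claim is also quite terse and glosses over exactly this point; the claim is most safely read as shorthand for the joint-event bound above.)
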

\begin{claim}\label{sub-step2}
Let $k>\frac{10}{(1-c)^2}np\log n$. Then $n\mathbb{P}[\sum_{i=1}^{k}W_i>\frac{k+kc}{2n}]=o(1)$.
\end{claim}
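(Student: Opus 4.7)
The plan is to reduce $\sum_{i=1}^{k} W_i$ to a sum of $km$ independent random variables and then apply the upper tail bound (\ref{CL-up}) of Theorem~\ref{Chung-Lu}. The central observation is about the discovery process: at step $i$, for each attribute $a_j \notin A_{i-1}$, the edge indicator $\mathbb{I}[(v_i, a_j) \in B]$ has not been exposed at any earlier step of the process --- if it had, $a_j$ would already belong to some $A_l'$ with $l<i$, hence to $A_{i-1}$. So these indicators can be supplied by a pre-generated collection of independent $\text{Bernoulli}(p_j)$ variables $Y_{i,j}$ (for $(i,j)\in[k]\times[m]$), independent of all prior randomness in the discovery process. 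Verifying this independence claim carefully, by induction on $i$, is the only subtle step.

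Under this coupling, $W_i = \sum_{j \notin A_{i-1}} p_j Y_{i,j} \leq \sum_{j=1}^{m} p_j Y_{i,j}$, so summing over $i$ yields the stochastic dominance
\[
\sum_{i=1}^{k} W_i \;\leq\; S \;:=\; \sum_{i=1}^{k} \sum_{j=1}^{m} p_j Y_{i,j},
\]
a sum of $km$ independent random variables, each bounded by $M_2 = p$. Routine moment computations give $\mathbb{E}[S] = k\sum_j p_j^2 = kc/n$ (writing $c = n\sum p_j^2$) and $\|S\|^2 = k\sum_j p_j^3 \leq kp\sum_j p_j^2 = kpc/n$.

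Finally, I apply (\ref{CL-up}) to $S$ with deviation $\lambda = \frac{k(1+c)}{2n} - \frac{kc}{n} = \frac{k(1-c)}{2n}$. Using the estimate $\|S\|^2 + p\lambda/3 \leq (kp/n)\bigl((5c+1)/6\bigr) < kp/n$, the exponent simplifies to at least $k(1-c)^2/(8np)$. The hypothesis $k > \frac{10}{(1-c)^2}\, np \log n$ then pushes the tail probability below $n^{-5/4}$, so $n$ times this probability is at most $n^{-1/4} = o(1)$, which is exactly the desired conclusion. The hard part is the coupling/independence setup; once that is in hand, the Chung--Lu step is purely mechanical.
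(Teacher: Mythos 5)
Your proposal is correct and follows essentially the same route as the paper: bound $\sum_{i=1}^k W_i$ above by a sum of $km$ independent (weighted) Bernoulli variables, then apply the Chung--Lu upper-tail inequality with $M_2 = p$ and $\lambda = (1-c)k/(2n)$. The paper writes the dominating sum directly as $\sum_{i,j} p_j \mathcal{I}_{v_i,a_j}$ and treats the independence as immediate, whereas you make the coupling explicit with auxiliary variables $Y_{i,j}$; this is a clarification rather than a different argument, and both derivations land on an exponent of order $(1-c)^2 k/(np)$ that the hypothesis $k > \frac{10}{(1-c)^2} np\log n$ turns into $o(1)$ after multiplication by $n$.
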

\begin{claim}\label{sub-step3}
Let $k>(15\log n)/(1-c)^2$ and $X_{(k)}^{+}\sim \Bin(n,\sum_{i=1}^{k}W_{i})$.  If $\sum_{i=1}^{k}W_i\leq (k-1+kc)/2n$, then $n\mathbb{P}[X_{(k)}^{+}\geq k-1]=o(1)$.
\end{claim}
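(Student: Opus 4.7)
The plan is to apply the Chernoff inequality (\ref{Chernoff}) directly, treating $W := \sum_{i=1}^{k} W_i$ as a fixed value in the interval $[0,(k-1+kc)/(2n)]$; since the hypothesis is a conditional statement on $W$, it suffices to bound the binomial tail uniformly over this range, and the bound is tightest at the endpoint. Under this setup, $X_{(k)}^+ \sim \Bin(n,W)$ has mean $\mu := nW \leq (k-1+kc)/2$, and reaching the threshold $k-1$ corresponds to a deviation $t := (k-1)-\mu$ above the mean. A one-line calculation yields $t \geq (k(1-c)-1)/2$, which is $(1-o(1))\,k(1-c)/2$ once $k(1-c)\to\infty$; the hypothesis $k > 15\log n/(1-c)^2$ guarantees this, since it forces $k(1-c) > 15\log n/(1-c) \to \infty$.

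Next I will bound the denominator of the Chernoff exponent. Writing $\mu + t/3 = \tfrac{2\mu}{3} + \tfrac{k-1}{3} \leq \tfrac{k(2+c)-2}{3} \leq k$ (valid since $c<1$), the bound (\ref{Chernoff}) gives
\[
\mathbb{P}[X_{(k)}^+ \geq k-1] \;\leq\; \exp\!\left(-\frac{t^2}{2(\mu + t/3)}\right) \;\leq\; \exp\!\left(-(1-o(1))\,\frac{k(1-c)^2}{8}\right).
\]
Substituting the hypothesis $k > 15\log n/(1-c)^2$ then forces the Chernoff exponent to be at least $(1-o(1))\cdot 15\log n/8$, so the probability above is at most $n^{-15/8+o(1)}$. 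Multiplying by $n$ yields $n^{-7/8+o(1)} = o(1)$, which is the desired conclusion.

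The only real obstacle is the accounting of constants: the Chernoff exponent must exceed $\log n$ with room to spare so that $n\,\mathbb{P}[\cdots]$ still vanishes, and I need to verify that the lower-order $-1$ in the lower bound on $t$ and the factor $(2+c)/3$ in the denominator bound do not erode the exponent below $\log n$. The constant $15$ in the hypothesis is calibrated precisely to leave this slack --- the effective Chernoff exponent comes out to roughly $1.87\log n$ --- so the argument closes with margin to spare. No branching-process machinery or more delicate concentration inequality (such as Theorem \ref{Chung-Lu}) is needed for this step; it is a direct Chernoff computation, in contrast to the analogous step in Claim \ref{sub-step2}, where the variable $\sum W_i$ is a sum of $p_i$'s of unequal magnitude and requires the McDiarmid/Chung--Lu inequality.
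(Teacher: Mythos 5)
Your proposal is correct and follows essentially the same route as the paper: both arguments reduce to a single application of Chernoff's inequality (\ref{Chernoff}) to a binomial whose parameter is taken at the upper endpoint $\sum W_i \le (k-1+kc)/(2n)$ (the paper phrases this as stochastic domination by $X_{(k)}^{++}\sim\Bin\bigl(n,(k+kc)/(2n)\bigr)$, which is the same monotonicity observation you invoke when you say the bound is tightest at the endpoint). The only cosmetic differences are in the algebraic packaging of the denominator $np+t/3$; both reach an exponent comfortably above $\log n$ once $k>(15\log n)/(1-c)^2$.
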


Before we prove the claims, we show that they imply the theorem.  First, note that the probability that the component in $G$ containing $v_1$ has size at least $k$ is bounded by $\mathbb{P}[\sum_{i=1}^{k}X_i\geq k-1]$.  Claims \ref{sub-step1} and \ref{sub-step3} imply that if $\sum_{i=1}^{k}W_i$ is small enough then all components have size $O(\log n)$.  However, to prove $\sum_{i=1}^{k}W_i$ is indeed small enough in Claim~\ref{sub-step2} we need $k=\Theta(np \log n)$.  As $k$ is the upper bound on the component sizes, we conclude that all components in $G$ have size at most $O(\max\{np \log n, \log n\})$ as desired.
\end{proof}

We now prove Claims~\ref{sub-step1},~\ref{sub-step2},~\ref{sub-step3}.

% Claim 1
\begin{proof}[Proof of \ref{sub-step1}]
It is clear that for each $i$, $X_i$ is stochastically dominated by $X_i^+$. Similarly $\sum_{i=1}^k X_i^+$ is stochastically dominated by $X_{(k)}^+ \sim \Bin(n,\sum_{i=1}^{k}W_{i})$.
\end{proof}

% Claim 2
\begin{proof}[Proof of \ref{sub-step2}]

Recall that $W_i$ is the weight of the attributes associated to $v_i$ in the discovery process. As attributes can only be discovered once during the process, $W_i\leq \sum_{j=1}^m p_j \mc{I}_{v_i,a_j}$. In particular $\sum_{i=1}^{k}W_i \leq \sum_{i=1}^{k} \sum_{j=1}^m p_j \mc{I}_{v_i,a_j}$.  
%Note that this 
The last sum consists of $k m$ summands each of which is no greater than $p$. Applying Theorem \ref{Chung-Lu} with $M_2=p$ it follows that
\begin{align*}
\left|\left|\sum_{i=1}^{k} \sum_{j=1}^{m} p_j \mc{I}_{v_i,a_j}\right|\right|^2 &= k \mathbb{E}\left[\sum_{j=1}^{m} p_j^2 \mc{I}_{v_i,a_j}\right] = k
%\sum_{j=1}^{m} p_{j}^{3} \\
%&\leq pk\sum_{j=1}^{m} p_j^2 = cpk/n.
\sum_{j=1}^{m} p_{j}^{3} \leq pk\sum_{j=1}^{m} p_j^2 = cpk/n \,.
\end{align*}
Applying Theorem \ref{Chung-Lu} with $\lambda = (1-c)k/(2n)$, we obtain
\begin{align*}
n\mathbb{P}\left[\sum_{i=1}^{k}W_i>\frac{1+c}{2}\frac{k}{n}\right] & \leq n\mathbb{P}\left[\sum_{i=1}^{k} \sum_{j=1}^{m} p_j \mc{I}_{v_i,a_j}>\frac{1+c}{2}\frac{k}{n}\right] \\
& \leq n\exp \left(-\frac{ (1-c)^2 k^2 }{(2n)^2 2\left(cpk/n + p(1-c)k/(6n)\right)} \right)\\
& = n\exp\left(-\frac{(1-c)^2k}{2np(4c+1)}\right) = o(1) \,,
\end{align*}
where the last equality follows for $k>\frac{10}{(1-c)^2}np\log n$.
\end{proof}

% Claim 3
\begin{proof}[Proof of \ref{sub-step3}]
As $X_{(k)}^+\sim \Bin(n,\sum_{i=1}^{k}W_i)$ and $\sum_{i=1}^{k}W_i\leq (k+kc)/2n$, it follows that $X_{(k)}^+$ is stochastically dominated by $X_{(k)}^{++}\sim \Bin(n,(k+kc)/2n)$.  By Chernoff's inequality, 

\begin{align*}
\mathbb{P}[ X_{(k)}^{++} \geq k-1] &\leq \mathbb{P} \left[ X_{(k)}^{++} \geq \frac{(1+c)k}{2} + \frac{(1-c)k}{2} -1 \right] \\
&\leq \exp \left( -\frac{\left(\frac{1-c}{2}k-1\right)^2}{(1+c)k+2\left(\frac{1-c}{2}k-1\right)/3}\right) \\
&\leq \exp \left( - \frac{(1-c)^2k}{5(2+c)} \right) = o(1) \,,
\end{align*}
%
%\begin{equation*}
%\mathbb{P}\left[X_{(k)}^{++}>\frac{k-1+kc}{2} + t\right]\leq\exp\left(-\frac{t^2}{k-1+kc+2t/3}\right)=o(1) \,.
%\end{equation*}
where the last equality follows by letting $k>(15\log n)/(1-c)^2$.
\end{proof}

%
%      Super Critical Section
%

\subsection{Supercritical phase}\label{sup-sec}
\begin{proof}[Proof of Theorem \ref{sup-main}]
Let $\gamma \in (1/2,2/3)$ and $\mathbf{p}=(p_i)_{i=1}^{m}$ be given such that $p=\max p_i = o(n^{-\gamma})$ and $\sum p_i^2 = c/n$ with constant $c>1$.
%, a constant.  
Let $G$ be the random intersection graph obtained. Consider the same discovery process as defined in Section \ref{disc} on the associated random bipartite graph $B$. In particular, recall that $A_i'$ is the set of newly discovered attributes at step $i$ and that $W_i = \sum_{j\in A_i'} p_j$. Let $k_-=\max\{ \frac{5npc}{(1-c)^2} \log n, \frac{125c}{(1-c)^2}\log n\}$ and $k_+=n^{\gamma}$. The following is an adaptation of standard results for the phase transition in Erd\H os-R\'enyi random graphs (Theorem 5.4, \cite{jlr}).

First note that for each $k \in [k_{-}, k_{+}]$, the following holds
\begin{equation}\label{sup-1}
\mathbb{E}\left[\sum_{i=1}^{k}W_i\right] \geq \frac{kc}{n}(1-o(1)).
\end{equation}
To see this, note that the probability that attribute $j$ is discovered by the $k$th %$k^{\text{th}}$ 
step is $1-(1-p_j)^k$.  Thus 
\begin{align*}
\mathbb{E}\left[\sum_{i=1}^{k}W_i\right] & =\sum_{j=1}^{m} p_j(1-(1-p_j)^k) \\
&= \sum_j \left(kp_j^2 - \binom{k}{2}p_j^3 + \cdots + \binom{k}{k}(-p_j)^{k+1} \right) \\
& \geq k\sum_j \left(p_j^2- \frac{k-1}{2} p_j^3\right)\\
& = k\frac{c}{n}(1-o(1)) \,.
\end{align*}
The last equality follows from $p_j=o(n^{-\gamma})$ which implies $pk=o(1)$.

We now use Equation (\ref{sup-1}) to show that for each $k\in [k_{-}, k_{+}]$, 
\begin{equation}\label{sup-2}
\mathbb{P}\left[\sum_{i=1}^{k}W_i\leq \frac{ck}{n} - \frac{(c-1)k}{3n}\right] = o(n^{-5/3}) \,.
\end{equation}
This follows from (\ref{CL-down}) by writing $\sum_{i=1}^{k}W_i=\sum_{j=1}^m p_j I_j$, with $I_j$ the indicator random variable equal to $1$ with probability $1-(1-p_j)^k$ and $0$ otherwise. Clearly, $p_j I_j\geq 0$ for each $j$ and $||\sum_{i=1}^{k}W_i||^2 = \sum_j \mathbb{E}[(p_j I_j)^2] = \sum_j p_j^2 [1-(1-p_j)^k]$.  Thus
\begin{align*}
\mathbb{P}\left[\sum_{i=1}^{k}W_i\leq \frac{ck}{n} - \frac{(c-1)k}{3n}\right] & \leq \exp\left[-\frac{(\frac{c-1}{3})^2 k^2/n^2}{2\sum_j p_j^2 [1-(1-p_j)^k] }\right]\\
& \leq \exp\left(-\frac{(c-1)^2k^2}{18n^2 \sum_j k p_j^3 }\right)\\
& \leq    \exp\left(-\frac{(c-1)^2 k}{18cpn}\right)=o(n^{-5/3}).
\end{align*}

Indeed, it follows from (\ref{CL-up}) and a similar derivation that for $k \in [k_{-}, k_{+}]$
\begin{equation}\label{sup2-2}
\mathbb{P}\left[\sum_{i=1}^{k}W_i> (2c-1)\frac{k}{n} \right] = o(n^{-5/3}).
\end{equation}
% This follows as long as $k>\frac{40cnp}{9(1-c)^2}\log n$

We now show that with high probability there are no components with $k \in [k_{-}, k_{+}]$ vertices. In particular, we show that either the discovery process terminates after $k_-$ steps, or that for each $k\in [k_{-}, k_{+}]$, there are at least $(c-1)k/2$ unsaturated vertices.  As $\sum_{i=1}^{k} X_i = |\mc{U}_k|+k-1$, it will be enough to show that for each $k\in [k_{-}, k_{+}]$, with high probability, $\sum_{i=1}^{k} X_i$ is at least $(c+1)k_+/2.$  From Equation (\ref{sup-2}), %we know that, 
with high probability the weight of the discovered attributes after $k$ steps will be at least $(2ck+k)/3n$.    

As we only need to find $(c+1)k_+/2$ vertices, we can bound each $X_i$ from below by $X_i^-\sim \Bin(n-(c+1)k_+/2, W_i)$.  We further bound from below $\sum_{i=1}^k X_i^-$ by $X^{-}_{(k)}$ where $X^{-}_{(k)} \sim \Bin(n-(c+1)/2, p_{k}^-)$, where $p_{k}^-$ is defined as
$$
p_{k}^- = \sum_{i=1}^{k}W_i - \frac{1}{2}\left(\sum_{i=1}^{k}W_i\right)^2.
$$
Equation (\ref{sup2-2}) implies that with high probability $p_k^- = \sum_{i=1}^k W_i(1-o(1))$.  In turn, Equation (\ref{sup-2}) then implies that 
$$
p_k^- \geq \frac{(2c+1)k}{3n}(1-o(1))\,.
$$
The probability there is a component of size between $k_-$ and $k_+$ is thus bounded above in the following manner:

\begin{align*}
n\sum_{k=k_-}^{k_+} \mathbb{P}\left[\sum_{i=1}^k X_i\leq k-1 +\frac{(c-1)k}{2} \right] & \leq n\sum_{k=k_-}^{k_+} \mathbb{P}\left[X_{(k)}^- \leq k-1 +\frac{(c-1)k}{2}\right]\\
& \leq n\sum_{k=k_-}^{k_+} \exp\left(-\frac{(c-1)^2k^2}{25(2c+1)k}\right)\\
&\leq nk_+ \exp\left(-\frac{(c-1)^2k_-}{25(2c+1)}\right)  =o(1).
% This gives us what the correct bound for k_- should be
\end{align*}

We now show that if there is a component of size at least $k_+$, then it is unique with high probability.  Suppose that there are two vertices $v'$ and $v''$ which belong to components of size at least $k_+$.  Consider our discovery process starting at $v'$.  At the end of the $k_+$ step, there are at least $(c-1)k_+/2$ unsaturated vertices which belong to the component containing $v'$.  Similarly, if we consider the discovery process starting with $v''$, again there are at least $(c-1)k_+/2$ unsaturated vertices in the component containing $v''$.  Denote by $A'$ and $A''$ the sets of attributes discovered by the $k_+$ step for each of the two discovery processes, respectively. If the two components are distinct, then in particular none of the unsaturated vertices in $V'$ are connected to any of the unsaturated vertices in $V''$.  With high probability this will not occur.  Indeed,  
\begin{align*}
\mathbb{P}[V'\not\sim V''] & \leq \prod_{i\not\in (A'\cup A'')} (1-p_i)^{(c-1)k_+} \leq \left[e^{-\sum_{i\not\in (A'\cup A'')} p_i}\right]^{k_+(c-1)}\\
& \leq \exp\left(-\frac{(c-1)k_+}{\sqrt{n}}\right) = \exp\left(-(c-1)n^{\gamma-\frac{1}{2}}\right) = o\left(\frac{1}{n^2}\right).
\end{align*}

The last inequality follows from the fact that $\sum_{i\not\in (A'\cup A'')} p_i^2\geq\frac{1}{n}$ implies $\sum_{i\not\in (A'\cup A'')} p_i\geq\frac{1}{\sqrt{n}}$.  Note that Equation (\ref{sup2-2}) implies that 
$n\sum_{i\not\in (A'\cup A'')} p_i^2=c-o(1)>1$.

We have yet to show that $G$ contains a component of size at least $k_+$.  Denote by $\rho=\rho(n,\mathbf{p})$ the probability that a given vertex of the random intersection graph will be in a small (i.e. of size at most $k_-$) component. Now $\rho$ is bounded from below by the extinction probability $\rho_-=\rho_-(n,\mathbf{p})$ of the associated multi-type branching process.  

To bound $\rho$ from above, recall Equation (\ref{sup2-2}) which implies that after $k_-$ steps of the discovery process, with high probability $W_{k_-}\leq(2c-1)k_-/n$.  Thus we bound $\rho$ by  $\rho_+ =\rho_+(n-k_-, \mathbf{p}')+o(1)$ where $\mathbf{p}' = \{p_i | i \in A\backslash A'\}$ for a suitable set of attributes $A'$ such that $\sum_{i\in A'}p_i\leq (2c-1)k_-/n$.  Lemma \ref{probs} implies that $\rho_+$ is largest when $A'$ consists of the smallest (by weight) elements of $A$. Thus assuming without loss of generality that the sequence $\mathbf{p}$ is monotone increasing, let $l$ be maximal such that $\sum_{i=1}^{l}p_i<(2c-1)k_-/n$.  Then if $A'$ consists of the first $l$ attributes, and $\mathbf{p}'=(p_i)_{i>k}^m$, $\rho_+$ will be largest given that $\sum_{i\in A'}p_i\leq (2c-1)k_-/n$.

The definition of $l$ implies that $n\sum_{i=1}^l p_i^2 = o(1)$ and thus $\sum_{i=1}^l \max\{p_i, np_i^2\} = o(1)$.  It follows from Equation (\ref{GW-multi}) that in the limit as $n\rightarrow \infty$ $\rho_+= \rho_-(1+o(1))$.  Thus $Y$, the expected number of vertices in small components of $G$ is $\rho(1 +o(1))n$.  To see that $Y$ is strongly concentrated about its mean, note that 
\begin{equation}
\nonumber
\mathbb{E}[Y^2] \leq n\rho(n,\mathbf{p})k_- + n\rho(n, \mathbf{p})n\rho(n-k_-, \mathbf{p})=(1+o(1))\mathbb{E}[Y]\,.
\end{equation}
Thus by Chebyshev's inequality the variance of $Y$ is $o(1)$ as desired.
\end{proof}

\begin{proof}[Proof of Theorem \ref{sup-main-old}]
The proof is exactly the same as above except that we give a weaker upper bound for $\rho$.  Let $G_1$ be the random intersection graph with parameters $n-k_-, \mathbf{p}'=(p_i)_{i>k}$ as above.  Let $Y$ be the degree distribution of $G_1$ and $\mathcal{Y}$ the associated single-type Galton-Watson branching process.  Let $G_2$ be the random intersection graph on $n-k_-$ vertices and $m=\lfloor \frac{c}{np^2}\rfloor$ attributes, each assigned the probability $p$.  Let $X$ be the degree distribution of $G_2$ and $\mathcal{X}$ the corresponding Galton-Watson process.  The probability generating function for $\mathcal{X}$ and $\mathcal{Y}$ will satisfy the conditions of Lemma \ref{probs} and thus the extinction probability for $\mathcal{X}$ gives an upper bound on the extinction probability for $\mathcal{Y}$ and thus an upper bound on the probability a vertex in $G_1$ is in a small component.  Applying Proposition \ref{uniform-case} it follows that the expected size of the largest component in $G$ is at least $\Omega(\min\{p^{-1}, n\})$. 
\end{proof}

%\appendix
%\section{Multitype Galton Watson Processes}
%Here we show that the multitype branching process defined above does indeed survive with nonzero probability if $n\sum_i p_i^2 =c>1$ and in particular that this survival probability is $1-\rho$ where $\rho$ is the unique solution in $[0,1)$ to Equation (\ref{GW-multi}).  These results can be found in standard textbooks on branching processes \cite{har}.  For completeness sake, we state them here.

\bibliographystyle{amsplain}
\bibliography{intersection-graphs}

\end{document}